\newtheorem{theorem}{Theorem}
\newtheorem{lemma}[theorem]{Lemma}
\newtheorem{definition}{Definition}
\newcommand{\I}{{\mathrm{I}}} 
\newcommand{\PFOOne}{\mbox{$\mathrm{P}_1$}}
\newcommand{\FO}{\mbox{\rm FO}}
\newcommand{\LOneHartig}{\mbox{$\mathcal{L}_1[\I]$}}
\newcommand{\FOOne}{\mbox{$\mathrm{FO}^1$}}
\newcommand{\FOt}{\mbox{$\mathrm{FO}^2$}}
\newcommand{\COne}{\mbox{$\mathrm{C}^1$}}
\newcommand{\GFt}{\mbox{\rm GF}^2}
\newcommand{\FOOneMod}{\mbox{$\mbox{\rm FO}_{\rm MOD}^1$}}
\newcommand{\NP}{\textsc{NP}}
\newcommand{\NExpTime}{\textsc{NExpTime}}
\newcommand{\set}[1]{\{#1\}}
\newcommand{\str}[1]{{\mathfrak{#1}}}
\newcommand{\N}{{\mathbb{N}}}
\newcommand{\Z}{{\mathbb{Z}}} 
\newcommand{\Mod}[1]{ {(\text{mod}\ #1)}}
\newcommand{\existsmod}[3]{\exists^{{#1} {#2} {\Mod{{#3}}} }}
\newcommand{\ind}[2]{\mathds{1}_{#1, #2}}
\newcommand{\tp}{\mathit{tp}} 
\newcommand{\deff}{\stackrel{\text{def}}{=}}
\newcommand{\semantics}[1]{ \llbracket #1 \rrbracket }
\begin{document}

\begin{frontmatter}
  \title{One-Variable Logic Meets Presburger Arithmetic}

  \author{Bartosz~Bednarczyk}
  \ead{bartosz.bednarczyk@cs.uni.wroc.pl}
  \address{Institute of Computer Science, University of Wroc\l aw, 
    Joliot-Curie 15, 50-383 Wroc\l aw, Poland}

  \begin{abstract}
    \noindent We consider the one-variable fragment of first-order logic 
    extended with Presburger constraints. The logic is designed in such a 
    way that it subsumes the previously-known fragments extended with counting, 
    modulo counting or cardinality comparison and combines their expressive 
    powers. We prove $\NP$-completeness of the logic by presenting an optimal 
    algorithm for solving its finite satisfiability problem.
  \end{abstract}

  \begin{keyword}
    finite satisfiability \sep computational complexity \sep decidability 
    \sep classical decision problem \sep arithmetics
  \end{keyword}
\end{frontmatter}


\section{Introduction}

It is well-known that first-order logic~$\FO$ cannot describe 
natural quantitative properties like parity or equicardinality of sets. 
To solve this problem one can think about enlarging the language with
special constructs, e.g., generalized quantifiers like counting quantifiers,
modulo counting quantifiers, majority quantifiers or the H{\"a}rtig quantifier.
However additional expressive power often comes with an increase in 
computational complexity. For example consider the two-variable fragment
of first order logic,~$\FOt$. It is known that~$\FOt$ becomes
undecidable when cardinality comparison via H{\"a}rtig or Rescher quantifiers 
is allowed~\cite{GradelOR99}. On the other hand its extension with counting 
quantifiers is decidable~\cite{GradelOR97,PacholskiST00}. The decidability 
status of~$\FOt$ with modulo counting quantifiers is currently unknown. Thus 
there is no hope to obtain a decidable extension of~$\FOt$ which allows 
all of these features. 

In this paper we take a closer look at the one-variable fragment of first-order 
logic, denoted here by~$\FOOne$. The logic is well-understood
and its finite satisfiability is known to be only~$\NP$-complete. 
We are aware of three extensions of~$\FOOne$ that differ in expressive 
power:~$\COne$,~$\FOOneMod$ and~$\LOneHartig$, 
see e.g.~\cite{PrattHartmann08, Bednarczyk-ESSLLI, GradelOR99}. The mentioned 
logics extend~$\FOOne$ with counting quantifiers~$\exists^{\geq k}$,
modulo-counting quantifiers~$\existsmod{=}{a}{b}$ and the so-called
H{\"a}rtig quantifier~$\I$, respectively. The semantics of the first two 
logics is very intuitive. For the third logic we define~$\I(\varphi, \psi)$ 
to be true if the total number of elements satisfying the formula~$\varphi$ 
is the same as the total number of elements satisfying~$\psi$. 
It follows from~\cite{PrattHartmann08} and~\cite{Bednarczyk-ESSLLI} that 
the finite satisfiability problem for~$\COne$ and~$\FOOneMod$ is~$\NP$-complete, 
even when the numbers in quantifiers are written in binary. Moreover, a practical 
algorithm for deciding satisfiability of a fragment of~$\COne$ was implemented 
and tested in~\cite{FingerB17}. For the third logic, namely~$\LOneHartig$ 
from~\cite{GradelOR99}, the authors of the paper stated that the logic is 
decidable but no proof or complexity bounds were given. 

\subsection{Our contribution}

In this article we present a novel logic~$\PFOOne$ which subsumes 
previously known logics with counting or cardinality comparison, 
i.e.,~$\COne$,~$\FOOneMod$ and~$\LOneHartig$ from~\cite{PrattHartmann08, Bednarczyk-ESSLLI, GradelOR99}. 
Moreover the logic allows one to express percentage constraints. As an example 
we can consider a property that the majority of elements of a model satisfies a given formula~$\varphi$.

We obtain a tight~$\NP$ upper bound for~$\PFOOne$. The proof goes via a 
translation of formulae into a system of inequalities and closely follows the 
techniques presented in~\cite{PrattHartmann08}. However some technical details
differ. As a by-product we fill a gap concerning the complexity of~$\LOneHartig$.


\subsection{Our motivations}

Our main motivation is to see what the scope of the technique of 
Pratt-Hartmann~\cite{PrattHartmann08} is for deciding finite satisfiability 
for~$\COne$. Moreover, we would like to see how powerful a logic we can obtain 
while keeping the complexity reasonably low. Last but not least, the proposed 
logic~$\PFOOne$ is the core part of Presburger Modal Logic~\cite{DemriL10} and 
its~$\NP$-completeness can be used to establish complexities for reasoning tasks
of the family of Euclidean Presburger Modal Logics. A slight generalisation of
the translation from~\cite{KazakovP09} shows that their local and global 
satisfiability can be easily reduced to the finite satisfiability of~$\PFOOne$.

We recently learned about the existence of QFBAPA~\cite{KuncakR07}, 
the quantifier-free fragment of Boolean Algebra with Presburger Arithmetic.
The logics can express similar properties and share the same complexity of 
satisfiability, namely~$\NP$-completeness. Nevertheless, we strongly 
believe that the logic~$\PFOOne$ is a very natural logic, arguably more 
elegant than QFBAPA, and the proof technique used here is much easier to understand.


\section{Preliminaries}

We employ the standard terminology from model theory and linear algebra.
We refer to structures with fraktur letters, and to their universes with 
the corresponding Roman letters. We always assume that structures have
non-empty universes. Here we are interested in \emph{finite} structures 
over a countable \emph{signature}~$\Sigma$ consisting of 
\emph{unary} relational symbols only.

Let~$\mathcal{L}$ be an arbitrary logic. In the 
\emph{finite satisfiability problem} for the logic~$\mathcal{L}$ we ask
whether an input formula~$\varphi$ from~$\mathcal{L}$ is finitely satisfiable, 
i.e., has a finite \emph{model}.


\subsection{Linear algebra and integer linear programming}

By~$\Z_{n}$ we denote the set of all remainders
modulo~$n$, that is the set~$\set{0,1, \ldots, n{-}1}$.

A \emph{linear inequality} is an expression of the form~$t \geq t'$,
where~$t$ and~$t'$ are linear terms. In this paper we are interested
only in linear inequalities with integer coefficients (written in binary). 
It is well known that solving systems of such inequalities over~$\N$ is 
in~$\NP$~\cite{BoroshandTreybig}. 

The following ``sparse solution'' lemma provides an upper bound on the minimum 
number of non-zero unknowns in solutions of systems of linear inequalities:

\begin{lemma}[\cite{AlievLEOW18}] \label{lemma:smallsolution} 
Let~$\mathcal{E}$ be a system of~$\mathit{I}$ inequalities with integer 
coefficients such that the absolute value of each coefficient from~$\mathcal{E}$ 
is bounded by~$\mathit{C}$. If~$\mathcal{E}$ has a solution over~$\N$, then it 
has also a solution over~$\N$ with the number of non-zero unknowns bounded 
by~$2 \mathit{I} \log{\left( 2 \mathit{C} \sqrt{\mathit{I}} \right)}$.
\end{lemma}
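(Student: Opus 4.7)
The plan is to adapt the classical support-reduction argument for integer linear programs, combined with a Siegel--Minkowski estimate on short integer kernel vectors. First, I would introduce $\mathit{I}$ non-negative integer slack variables to eliminate the inequalities, obtaining an equivalent equation system $A\mathbf{x} = \mathbf{b}$ where $A$ is an $\mathit{I}\times N$ integer matrix whose entries have absolute value at most $\mathit{C}$ (the slack columns contribute only $\pm 1$, which does not inflate $\mathit{C}$). Any support bound for the augmented system immediately transfers to the original variables, since slacks are only extra unknowns.

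Among all non-negative integer solutions of $A\mathbf{x} = \mathbf{b}$, I would pick one, $\mathbf{x}^{\star}$, whose support $J \deff \set{j : \mathbf{x}^{\star}_j > 0}$ has minimal size $s$, and argue by contradiction that $s$ cannot exceed the claimed bound. Suppose that $s > 2\mathit{I}\log(2\mathit{C}\sqrt{\mathit{I}})$ and consider the $\mathit{I}\times s$ submatrix $A_J$. By Hadamard's inequality every $\mathit{I}\times \mathit{I}$ minor of $A_J$ has absolute value at most $(\sqrt{\mathit{I}}\,\mathit{C})^{\mathit{I}}$, and a Siegel--Minkowski-type argument then provides a non-zero integer vector $\mathbf{v}\in\Z^{s}$ in the kernel of $A_J$ whose $\ell_\infty$-norm is controlled by a function of $(\sqrt{\mathit{I}}\,\mathit{C})^{\mathit{I}/(s-\mathit{I})}$, an expression that is forced to be small precisely when $s$ exceeds the threshold.

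The support-reduction step then runs as follows: extend $\mathbf{v}$ by zero outside $J$ and consider the integer family $\mathbf{x}^{\star} + t\mathbf{v}$ with $t \in \Z$. Since $\mathbf{v}$ is a non-zero kernel element it must have entries of both signs, so one can choose $t$ of suitable sign and maximal magnitude satisfying $\mathbf{x}^{\star} + t\mathbf{v} \ge \mathbf{0}$ and driving at least one coordinate of $J$ to zero. The resulting non-negative integer vector is still a solution of $A\mathbf{x}=\mathbf{b}$ but has strictly smaller support, contradicting the minimality of $\mathbf{x}^{\star}$.

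The main obstacle is the quantitative calibration of the Siegel--Minkowski estimate: because $t$ must be an integer, one cannot exactly zero out a coordinate unless the entries of $\mathbf{v}$ are sufficiently small relative to those of $\mathbf{x}^{\star}$, and controlling this divisibility issue is what pushes the support bound from the naive rank-based $O(\mathit{I})$ up to $2\mathit{I}\log(2\mathit{C}\sqrt{\mathit{I}})$. Reproducing the sharp constants requires the careful geometry-of-numbers accounting detailed in~\cite{AlievLEOW18}.
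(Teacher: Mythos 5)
First, note that the paper itself does not prove this lemma: it is imported verbatim from~\cite{AlievLEOW18} and explicitly used as a black box, so there is no in-paper argument to compare yours against; the only fair comparison is with the proof in the cited source, whose overall shape (a minimal-support nonnegative integer solution plus a short nonzero integer vector in the kernel restricted to the support, used to strictly shrink the support) your sketch does reproduce. The slack-variable reduction from inequalities to equations is fine, and so is the observation that a support bound for the augmented system transfers to the original unknowns.

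However, as written your argument has a genuine gap precisely at the decisive step. Two concrete issues. (1) The claim that a non-zero kernel vector ``must have entries of both signs'' is false (the kernel of the $1\times 2$ matrix $(1\ \ {-}1)$ contains $(1,1)$); this is repairable by allowing $t$ of either sign, but it signals that the sign/step analysis has not really been done. (2) More seriously, with an $\ell_\infty$-bound of the form $(\sqrt{I}\,C)^{I/(s-I)}$ on the kernel vector $\mathbf{v}$ you cannot in general drive a coordinate of $\mathbf{x}^{\star}+t\mathbf{v}$ \emph{exactly} to zero with integer $t$ (e.g.\ $v_j=2$, $x^{\star}_j=3$): the reduction only works once the bound is below $2$, i.e.\ once $\mathbf{v}$ can be taken with entries in $\set{-1,0,1}$ and supported inside the support of $\mathbf{x}^{\star}$. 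Establishing this --- including whether Hadamard/Siegel actually yields $\sqrt{I}$ rather than $\sqrt{s}$ inside the bound (Bombieri--Vaaler goes through $\sqrt{\det(A_JA_J^{T})}$, whose Hadamard estimate over rows of length $s$ gives $(\sqrt{s}\,C)^{I}$, leading to an inequality that is implicit in $s$), handling possible rank deficiency of $A_J$, and verifying the resulting inequality exactly at the threshold $s>2I\log\bigl(2C\sqrt{I}\bigr)$ --- is exactly where the stated constant comes from, and your text defers all of it back to~\cite{AlievLEOW18}. So the proposal is a plausible plan that reduces the lemma to its source rather than a proof of it; to stand alone it would need the quantitative kernel-vector step (or the subset-sum/pigeonhole argument used in the literature to produce a $\set{-1,0,1}$ kernel vector directly) carried out in full.
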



\subsection{Syntax of the logic~\texorpdfstring{$\PFOOne$}{POne}}

In this article we propose an extension of a function-free one-variable 
fragment of first-order logic with \textit{counting terms} and 
\textit{Presburger constraints}. We let~$\PFOOne$ denote the formalism. 

The main ingredients of formulae of~$\PFOOne$ are counting 
terms~$t_x$ \cite{GradelOR99}. 
Their intuitive role is to count the total number of witnesses of a given 
formula featuring a single variable~$x$. Such terms can be multiplied by
integer constants and added to each other.
On the top level we allow for the comparison of values of counting-terms with 
a given threshold using a greater-than operator~$\geq$ and to test 
congruence modulo some number~$k$ using~$\equiv_k$. 
More general formulae can be constructed with Boolean connectives 
and by means of nesting.

The minimal syntax of the logic~$\PFOOne$ is given by the following BNF grammar:
$$t_x ::= t_x + t_x \ | \ a \cdot \sharp_x[\varphi(x)]$$
$$
  \varphi ::= 
    P(x) \ | \ \neg \varphi \ | \ \varphi \wedge \varphi \ |
    \ t_x \geq b \ | \ t_x \equiv_c d
$$
where~$P \in \Sigma$ is a unary relational symbol,~$a \in \Z \setminus \{ 0 \}$ 
is a non-zero integer,~$b \in \N$ is a natural number,~$c \in \Z_+$ is a 
positive integer and~$d \in \Z_c$ is a remainder modulo~$c$.

A counting term of the form 
$a_1 \cdot \sharp_x[\varphi_1] + \ldots +
a_n \cdot \sharp_x[\varphi_n]$ 
is abbreviated by~$\Sigma_{i=1}^{n} a_i \cdot \sharp_x[\varphi_i]$.
Note that all standard logical connectives such as~$\vee$, 
$\rightarrow$,~$\leftrightarrow$ as well as other (in)equality 
symbols like~$<, >, \leq$ and~$=$ can be easily defined using Boolean 
combinations and constants. Hence, we will use them as abbreviations. 

We write~$|\varphi|$ to denote the length of a formula~$\varphi$, i.e., 
the number of bits required to encode~$\varphi$ as a string. We will 
assume that all numbers appearing in~$\varphi$ are written in binary. 

\subsection{Semantics of the logic~\texorpdfstring{$\PFOOne$}{POne}}

The semantics of the logic~$\PFOOne$ is a straightforward extension of 
the semantics of first-order logic. For formulae~$\varphi$ not involving
counting terms, the semantics~$\semantics{\varphi}^{\str{M}}$ of~$\varphi$ in a 
model~$\str{M}$ is the same as in first-order logic. We extend it to 
counting terms by defining~$\semantics{\sharp_x[\varphi(x)]}^{\str{M}}$ to be
the cardinality of the set~$\{ a \in M \mid \str{M} \models \varphi[a] \}$.
Addition, multiplication by a constant and comparison are treated in the
obvious way.

\subsection{Expressive power}

We note here that~$\PFOOne$ trivially extends the one-variable fragment of 
first-order logic. Moreover, the logic can capture a scenario of threshold 
counting~$\exists^{\geq k}\varphi(x)$ (i.e.,~$\COne$ from~\cite{PrattHartmann08}) 
as well as modulo counting~$\existsmod{=}{a}{b}\varphi(x)$ (i.e.,~$\FOOneMod$ 
from~\cite{Bednarczyk-ESSLLI}). The logic also allows cardinality comparison, 
i.e., it can simulate the so-called H{\"a}rtig and Rescher quantifiers 
from~\cite{GradelOR99} and percentage constraints, 
e.g.~$\I{x}.(\varphi(x), \psi(x))$ can be encoded 
as~$\sharp_x[\varphi(x)] - \sharp_x[\psi(x)] = 0$. 
Hence~$\PFOOne$ can even express some second-order properties.

\subsection{Types and normal forms}

Let~$\tau$ be a finite signature, and following a standard terminology, we 
define an atomic~$1$-type over~$\tau$ as a maximal satisfiable set of atoms 
or negated atoms  involving only the variable~$x$. Usually we identify a 
$1$-type  with the conjunction of all its elements. We note here that the 
number of all atomic~$1$-types is exponential in the size of~$\tau$. 

When a formula~$\varphi$ is fixed, we often refer to its signature 
(i.e., the set of unary symbols occurring in~$\varphi$) with~$\tau_{\varphi}$. 
Then, the set of all~$1$-types over~$\tau_{\varphi}$ is 
denoted by~$\tp_{\varphi}$ and we refer to its elements 
with~$\pi_1^{\tau_\varphi}, \pi_2^{\tau_\varphi}, 
\ldots, \pi_{|\tp_{\varphi}|}^{\tau_\varphi}$. Additionally, when both a model
$\str{M}$ and a~$1$-type~$\pi$ are fixed, we define~$|\pi|_{\str{M}}$ as the
total number of elements from a structure~$M$ satisfying a~$1$-type~$\pi$.

\begin{definition} \label{def:pfone_normal_form}
We say that a formula~$\varphi \in \PFOOne$ is \emph{flat}, if:
$$\varphi = \bigwedge_{i=1}^{n} 
\left( \sum_{j=0}^{n_i} a_{i,j} \cdot \sharp_x[\varphi_{i,j}] \right)
\bowtie_i b_i
$$
where~$\bowtie_i$ is a comparison symbol, i.e., 
$\bowtie_i\in \set{\leq, \geq, \equiv_k \mid  k \in \N }$,
each~$a_{i,j} \in \Z \setminus \{ 0 \}$ is a non-zero integer, each~$b_i \in \N$
is a natural number and all formulae~$\varphi_{i,j}$ are free of counting terms.
\end{definition}

The main purpose of introducing a flat form for~$\PFOOne$ formulae is to
avoid nesting of counting terms and to simplify reasoning about satisfaction of 
a formula. The following lemma shows that every satisfiable 
$\PFOOne$ formula can be flattened in~$\NP$:

\begin{lemma} \label{lemma:flattening}
There exists a non-deterministic polynomial time procedure, taking as its input
a~$\PFOOne$ formula over a signature~$\tau$ and producing a flat formula
$\varphi'$ over the same signature~$\tau$, such that~$\varphi$ is
satisfiable iff the procedure has a run producing a satisfiable~$\varphi'$.
\end{lemma}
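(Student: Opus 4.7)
The plan is a nondeterministic polynomial-time translation that flattens $\varphi$ by guessing, up front, the truth value of every counting-comparison subformula of $\varphi$.

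First I list every subformula of $\varphi$ of the form $t_x \geq b$ or $t_x \equiv_c d$, at every nesting depth, calling them $\sigma_1, \ldots, \sigma_m$; their number is bounded by $|\varphi|$. The key observation is that each $\sigma_i$ is in fact a sentence, because every free occurrence of $x$ inside $\sigma_i$ is captured by a surrounding $\sharp_x[\cdot]$; hence in any model $\str{M}$ the value of $\sigma_i$ is a single global Boolean. I then nondeterministically guess $v_i \in \{0,1\}$ for each $\sigma_i$, and additionally, whenever $v_i = 0$ and $\sigma_i = t_x \equiv_c d$, an alternative residue $d'_i \in \Z_c \setminus \{d\}$.

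Second, I process the $\sigma_i$ in order of increasing structural complexity and rewrite: every occurrence of $\sigma_j$ lying strictly inside a larger formula, in particular inside the body of any counting term, is replaced by the propositional constant corresponding to $v_j$, after which the Boolean connectives are simplified. Because each $\sigma_j$ is a sentence, this rewriting preserves meaning whenever the $v_j$'s agree with the actual truth values of the $\sigma_j$'s in the intended model. After the rewriting, the body of every counting term is counting-term-free, so each $\sigma_i$ becomes a single flat comparison. I also propositionally evaluate the top-level skeleton of $\varphi$, viewing $\sigma_1, \ldots, \sigma_m$ as Boolean variables, under $v_1, \ldots, v_m$, and abort the current run if the result is not $\top$.

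The output is $\varphi' = \bigwedge_{i=1}^{m} \ell_i$, where $\ell_i$ is $\sigma_i$ itself when $v_i = 1$; the comparison $t_x \leq b-1$ (with all coefficients of $t_x$ negated and the sign of the right-hand side flipped in the exceptional case $b = 0$, so as to keep the right-hand side in $\N$) when $v_i = 0$ and $\sigma_i = t_x \geq b$; and $t_x \equiv_c d'_i$ when $v_i = 0$ and $\sigma_i$ is modular. All guessed data has polynomially many bits and every substitution is polynomial, so the procedure runs in nondeterministic polynomial time. Soundness holds because any $\str{N} \models \varphi'$ forces the truth value of each $\sigma_i$ in $\str{N}$ to be $v_i$, and the propositional check then ensures $\str{N} \models \varphi$; completeness is immediate by guessing, for any given model of $\varphi$, the actual truth values of the $\sigma_i$ in it. The only conceptual point, rather than an algorithmic obstacle, is the semantic soundness of the syntactic substitution, which rests entirely on the sentence-like nature of each $\sigma_i$.
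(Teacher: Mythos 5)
Your proposal is correct and takes essentially the same route as the paper: the paper's sketch likewise exploits that each counting-comparison subformula is a sentence ("speaks only globally"), guesses its truth value, replaces it inside the formula by $\top$ or $\bot$, and hoists the corresponding literal to an outer conjunction, guessing an alternative residue $d'$ for a negated congruence; the only difference is that the paper performs this iteratively innermost-first while you batch all guesses up front, which is the same idea. Your explicit rewriting of $\neg(t_x \geq b)$ as $t_x \leq b-1$ (with coefficient negation when $b=0$) just spells out a detail the paper leaves implicit.
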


\begin{proof}[Sketch of proof.]
The proof goes in a standard fashion, similarly to the proof of Theorem~$1$ 
in~\cite{PrattHartmann08}. The main idea of the algorithm is to take the 
innermost expression~$e$, from the original formula~$\varphi$, of the
form~$\Sigma a_i \sharp_x[\varphi_i] \geq a$ or~$\Sigma a_i \sharp_x[\varphi_i] \equiv_b c$. 
Since we are designing an~$\NP$ procedure and an expression~$e$ speaks only 
globally about the total number of elements, we can guess whether~$e$ is 
satisfied or not. Then depending on a guess we replace~$e$ with~$\top$ or 
$\bot$ and we put, respectively,~$e$ or~$\neg e$ in front of the formula. 
Additionally, in the case when~$\neg e$ contains a modulo constraint, we 
guess a proper remainder~$c'$ and replace~$\neg \Sigma a_i \sharp_x[\varphi_i] \equiv_b c$ 
with~$\Sigma a_i \sharp_x[\varphi_i] \equiv_b c'$. 
We repeat the whole process until we obtain a flat formula.
\end{proof}


\section{The finite satisfiability of~\texorpdfstring{$\PFOOne$}{POne}}

In this section we will show that the one-variable fragment of first-order
logic remains~$\NP$-complete even if we extend it with Presburger constraints. 
As we mentioned in the beginning of the paper, we are interested 
only in finite models since e.g. modulo constraints do not make 
sense over infinite structures. Our proof will strongly rely on 
techniques presented in~\cite{PrattHartmann08}, namely 
reducing our problem to integer linear programming. 

\subsection{Overview of the method}

Throughout this section, we fix a satisfiable~$\PFOOne$ formula~$\varphi$. 
Due to Lemma \ref{lemma:flattening} we can always produce a flat version 
of~$\varphi$, thus we assume that~$\varphi$ is flat. 

We will first sketch our approach. A crucial observation leading to a simple 
description of~$\PFOOne$ models is that the logic cannot speak about any kind 
of connection between two distinct elements of a model. 
Thus any model~$\str{M}$ of~$\varphi$ can be described up to 
isomorphism by the information about the total number of elements of 
given~$1$-types. We call such information a \emph{characteristic} 
\emph{vector}~$\chi_{\varphi}$. It could be defined in the following way:
$$
\chi_{\varphi} \; \deff \; \left( 
|\pi_0^{\tau_\varphi}|_{\str{M}}, \; |\pi_1^{\tau_\varphi}|_{\str{M}}, \; 
\ldots, \; |\pi_{|\tp_{\varphi}|}^{\tau_\varphi}|_{\str{M}}
\right),
$$
where the~$i$-th element of~$\chi_{\varphi}$ is simply the total 
number of elements from~$M$ of the $i$-th $1$-type.

Our goal is to translate a formula~$\varphi$ into a system of inequalities 
and congruences~$\mathcal{E}$, whose solution will be a tuple~$\chi_{\varphi}$.
Then, we will get rid of congruences, i.e., replace each of them with 
inequalities, at the expense of introducing polynomially many fresh variables.
The obtained system~$\mathcal{E}'$, as well as some of its coefficients, 
will be exponential due to the binary encoding of numbers. Since integer 
linear programming is in~$\NP$~\cite{BoroshandTreybig} we will obtain 
an~$\NExpTime$ upper bound. To improve the complexity of the algorithm, we 
will use Lemma~\ref{lemma:smallsolution}, which states that if there is a 
solution for~$\mathcal{E}$, there is also a ``sparse'' solution, i.e., 
assigning only polynomially many non-zero values to unknowns. 

It is worth pointing out that due to the presence of exponential coefficients
we cannot easily adapt the lemma about small solutions from~\cite{PrattHartmann08}. 
The technique we use, namely Lemma~\ref{lemma:smallsolution}, is more 
sophisticated and requires a more difficult proof. We will use it as a black box.

\subsection{A translation into a system of inequalities and congruences} \label{sec:trans}

We are going to describe a potential model~$\str{M}$ of the formula~$\varphi$
in terms of unknowns and inequalities. In the desired system of inequalities,
we will have exponentially many variables~$x_k$, where each~$x_k$ corresponds
to~$|\pi_k|_{\str{M}}$ in a characteristic vector and each inequality or 
congruence corresponds to a threshold given in some conjunct from~$\varphi$.

Let~$\varphi_i$ be the~$i$-th conjunct from~$\varphi$, i.e., 
$\varphi_i = \left( \sum_{j=0}^{n_i} a_{i,j} 
\cdot \sharp_x[\varphi_{i,j}] \right) \bowtie_i b_i$.
Then, for every~$1$-type~$\pi_k$ we will associate an indicator~$\ind{i,j}{k}$,
whose intuitive role will be to tell us whether the~$k$-th type~$\pi_k$ is 
compatible with the formula~$\varphi_{i,j}$. More formally:
$$
\ind{i,j}{k} {=} \left\{\begin{matrix} 1, & \text{if} \;
\models \pi_{k} \rightarrow \varphi_{i,j} \\ 0, & \text{otherwise}
\end{matrix}\right.
$$

With the above definition it is not hard to see that the value of 
a counting term~$\sharp_x[\varphi_{i,j}]$ is equal to 
$\Sigma_{k=1}^{|\tp_{\varphi}|} \ind{i,j}{k} \cdot x_k$. By multiplying
such value with constants~$a_{i,j}$ and summing it over~$j$, the whole
formula~$\varphi_i$ can be represented as the following inequality or congruence:
$$
\left( \sum_{j=0}^{n_i} a_{i,j} \cdot 
\left( \Sigma_{k=1}^{|\tp_{\varphi}|} \ind{i,j}{k} \cdot x_k  \right)
\right) \bowtie_i b_i
$$
After rearranging the left-hand side of the above expression, 
we obtain a linear term with unknowns~$x_1, x_2, \ldots, x_{|\tp_{\varphi}|}$. 
Note that coefficients in front of variables~$x_k$ are exponential due to the
binary encoding. We construct a system of inequalities and 
congruences~$\mathcal{E}_{\varphi}$ by translating each conjunct~$\varphi_i$ 
from~$\varphi$ in the presented way.

The following lemma follows directly from the fact that 
each model~$\mathfrak{M}$ of~$\PFOOne$ formula can be described up 
to isomorphism by a characteristic vector and from the construction 
of~$\mathcal{E}_{\varphi}$. 

\begin{lemma}\label{lemma:vec1}
Each solution of~$\mathcal{E}_{\varphi}$ is a characteristic vector 
of some model~$\mathfrak{M}$ of a~$\PFOOne$ formula~$\varphi$.
\end{lemma}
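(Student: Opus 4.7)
My plan is to turn a given solution into a model by directly reading off the $1$-type sizes, and then to check that each conjunct of $\varphi$ is satisfied. Let $(v_1, \ldots, v_{|\tp_\varphi|})$ be a solution of $\mathcal{E}_\varphi$. I would build $\mathfrak{M}$ by taking its universe to be the disjoint union $\bigsqcup_{k} M_k$, where $M_k$ is a fresh set of $v_k$ elements, and by interpreting each unary symbol $P \in \tau_\varphi$ so that every element of $M_k$ realises precisely the atoms of $\pi_k^{\tau_\varphi}$. By construction the characteristic vector of $\mathfrak{M}$ equals $(v_1, \ldots, v_{|\tp_\varphi|})$, so the only nontrivial task is to verify $\mathfrak{M} \models \varphi$.

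The central observation is that, in the flat form, every subformula $\varphi_{i,j}$ sitting inside a counting term is a Boolean combination of atoms in the single variable $x$. Its truth value at an element $a \in M$ depends only on the $1$-type of $a$; concretely, for $a \in M_k$ one has $\mathfrak{M} \models \varphi_{i,j}[a]$ iff $\ind{i,j}{k} = 1$. Counting the elements type by type then yields
$$\semantics{\sharp_x[\varphi_{i,j}]}^{\mathfrak{M}} \; = \; \sum_{k=1}^{|\tp_\varphi|} \ind{i,j}{k} \cdot v_k,$$
which is exactly the linear expression used in Section~\ref{sec:trans} when building $\mathcal{E}_\varphi$.

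To finish, I would substitute this identity into each conjunct $\bigl(\sum_j a_{i,j} \cdot \sharp_x[\varphi_{i,j}]\bigr) \bowtie_i b_i$ and rearrange. The left-hand side becomes precisely the $i$-th (in)equality or congruence of $\mathcal{E}_\varphi$ evaluated at the tuple $(v_1, \ldots, v_{|\tp_\varphi|})$; since the tuple is a solution, the condition holds, and in the modular case the two sides evaluate to the very same integer so the congruence is preserved as well. Taking the conjunction over all $i$ gives $\mathfrak{M} \models \varphi$. I do not anticipate a real obstacle: once the flat form is in place, the proof is a pure bookkeeping exercise whose only substantive input is the one-variable restriction, which is exactly what ensures that $1$-types are complete invariants of elements. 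The one side condition to keep in mind is that the degenerate all-zero solution, if it arose, would violate the non-empty universe convention and must be excluded separately.
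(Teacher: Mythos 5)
Your proof is correct and follows essentially the same route the paper intends: the paper disposes of this lemma in one sentence by appealing to the construction of $\mathcal{E}_{\varphi}$ in Section~\ref{sec:trans}, and your construction of a model with $v_k$ realisations of each $1$-type, together with the identity $\semantics{\sharp_x[\varphi_{i,j}]}^{\mathfrak{M}} = \sum_k \ind{i,j}{k} \cdot v_k$, is exactly the bookkeeping that sentence leaves implicit. Your remark about excluding the all-zero solution (non-empty universe) is a detail the paper glosses over, and it is a sensible addition.
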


\subsection{Getting rid of congruences}

The obtained system~$\mathcal{E}_{\varphi}$ can still contain linear terms with 
congruences. We will show a way how to replace them with inequalities. 
Let us assume that the~$i$-th equation of the system~$\mathcal{E}_{\varphi}$ is 
a congruence of the following form:

$$
a_1^{i} \cdot x_1 + a_2^{i} \cdot x_2 + \ldots + 
a_{|\tp_{\varphi}|}^{i} \cdot x_{|\tp_{\varphi}|} \equiv_{k_i} b_i
$$

For any natural number~$S_i$, there exists a \emph{remainder}~$r_i \in \Z_{k_i}$
and a \emph{quotient}~$q_i \in \N$, such that~$S_i = r_i + q_i k_i$. 
Thus we only need to ensure that the remainder~$r_i$ is equal to~$b_i$.
Since we do not know the precise value of the quotient~$q_i$, we introduce
a fresh variable~$y_i$ to represent it. We can rewrite the above 
congruence as~$\sum_{j=1}^{|\tp_{\varphi}|} a_j^i = b_i + k_i \cdot y_i$, 
which is equivalent to:
$$
a_1^{i} \cdot x_1 + a_2^{i} \cdot x_2 + \ldots + 
a_{|\tp_{\varphi}|}^{i} \cdot x_{|\tp_{\varphi}|} - b_i - k_i \cdot y_i \leq 0,
$$
$$
a_1^{i} \cdot x_1 + a_2^{i} \cdot x_2 + \ldots + 
a_{|\tp_{\varphi}|}^{i} \cdot x_{|\tp_{\varphi}|} - b_i - k_i \cdot y_i \geq 0,
$$

Let~$\mathcal{E}_{\varphi}'$ be the system of inequalities obtained 
from~$\mathcal{E}_{\varphi}$ by exhaustive elimination of all congruences. 
Since each step of the ``congruence-elimination'' procedure described 
above is sound, together with Lemma~\ref{lemma:vec1} we establish:

\begin{lemma}\label{lemma:vec2}
Each solution of~$\mathcal{E}_{\varphi}'$ is a characteristic vector 
of some model~$\mathfrak{M}$ of a~$\PFOOne$ formula~$\varphi$.
\end{lemma}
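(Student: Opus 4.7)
The plan is to reduce Lemma~\ref{lemma:vec2} to Lemma~\ref{lemma:vec1} by showing that the projection onto the $x$-coordinates of any solution of $\mathcal{E}_\varphi'$ is itself a solution of $\mathcal{E}_\varphi$. Once this is done, Lemma~\ref{lemma:vec1} immediately produces a model of $\varphi$ whose characteristic vector is exactly that projection, which is all that the statement requires.

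Concretely, I would fix an arbitrary solution $(x_1, \ldots, x_{|\tp_{\varphi}|}, y_1, \ldots, y_m)$ of $\mathcal{E}_\varphi'$ over~$\N$, where $y_1, \ldots, y_m$ are the fresh variables introduced during the elimination of the $m$ congruences of $\mathcal{E}_\varphi$. Every inequality of $\mathcal{E}_\varphi$ appears verbatim in $\mathcal{E}_\varphi'$ and is therefore satisfied by the projection. For the $i$-th original congruence $a_1^i \cdot x_1 + \ldots + a_{|\tp_{\varphi}|}^i \cdot x_{|\tp_{\varphi}|} \equiv_{k_i} b_i$, the elimination produced the pair of inequalities $\sum_j a_j^i \cdot x_j - b_i - k_i \cdot y_i \leq 0$ and $\sum_j a_j^i \cdot x_j - b_i - k_i \cdot y_i \geq 0$; combined, they force the equality $\sum_j a_j^i \cdot x_j - b_i = k_i \cdot y_i$, which immediately yields the congruence modulo $k_i$. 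Hence every constraint of $\mathcal{E}_\varphi$ is satisfied by $(x_1, \ldots, x_{|\tp_{\varphi}|})$, and Lemma~\ref{lemma:vec1} delivers the desired model.

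I do not expect any genuine obstacle: the slack variable $y_i$ occurs only in the two inequalities produced from the $i$-th congruence, so different congruences do not interact and the argument is entirely local to each replaced block. The only point worth flagging is that $y_i$ is a non-negative integer (as solutions are taken over~$\N$), which is precisely consistent with its intended role as the quotient of $\sum_j a_j^i \cdot x_j - b_i$ by $k_i$ in the decomposition $S_i = r_i + q_i k_i$ used to motivate the elimination step.
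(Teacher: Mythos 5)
Your proposal is correct and follows essentially the same route as the paper: the paper simply asserts that each congruence-elimination step is sound and combines this with Lemma~\ref{lemma:vec1}, which is exactly the projection argument you spell out explicitly. Your version is just a more detailed write-up of the same reasoning, including the correct observation that the two inequalities force $\sum_j a_j^i \cdot x_j - b_i = k_i \cdot y_i$ and hence the congruence.
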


One can observe that the number of
equations in~$\mathcal{E}_{\varphi}'$ is bounded by~$2n$ (i.e., where~$n$ 
is the number of conjuncts from flat~$\varphi$), which is clearly of 
polynomial size in~$|\varphi|$. Integer coefficients of the 
system~$\mathcal{E}_{\varphi}'$ can be bounded by the sum of the absolute 
values of the numbers occurring in the formula~$\varphi$. Since every number 
can be exponential in~$|\varphi|$ (due to the binary encoding) and the 
mentioned sum contains at most polynomially many elements, we can conclude 
that each coefficient from the system~$\mathcal{E}_{\varphi}'$ is 
bounded exponentially in~$|\varphi|$.

\subsection{Algorithm}

By using Lemma~\ref{lemma:smallsolution} we know that the minimum number of 
non-zero unknowns in a sparse solution of~$\mathcal{E}_{\varphi}'$ can be 
bounded by a polynomial function of~$|\varphi|$. Hence we 
non-deterministically guess which unknowns will be non-zero and we construct 
a corresponding system~$\mathcal{E}_{\varphi}''$ directly for them. The obtained 
system has polynomial size in~$|\varphi|$, thus it is solvable in~$\NP$.

Below we present a non-deterministic polynomial time algorithm for testing
whether a given~$\PFOOne$ formula has a finite model.

\begin{algorithm}[h] 
  \label{algo:sat_test_pfoone}
  \DontPrintSemicolon
  \KwData{A formula~$\varphi \in \PFOOne$} 
  \caption{Satisfiability test for~$\PFOOne$}

  \textbf{guess}~$\varphi'$ -- a flat version of~$\varphi$ \tcp*{in~$\NP$, Lemma \ref{lemma:flattening}}

  \textbf{guess} which~$1$-types are realized at least once. \tcp*{polynomially many, Lemma \ref{lemma:smallsolution} } 

  Write the system of inequalities~$\mathcal{E}_{\varphi}''$ for the guessed~$1$-types. \tcp*{of poly size} 

  Return \textit{True} iff ~$\mathcal{E}_{\varphi}''$ has a solution over~$\N$. \tcp*{in~$\NP$ \cite{BoroshandTreybig}}
\end{algorithm}

To ensure the correctness of the algorithm, we prove the following lemma:
\begin{lemma}
  A~formula $\varphi \in \PFOOne$ has a finite model if and only if
  Procedure~\ref{algo:sat_test_pfoone} returns \textit{True}.
\end{lemma}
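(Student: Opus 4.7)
My plan is to prove the two implications separately, anchored on the earlier construction of $\mathcal{E}_\varphi'$ and on Lemmas~\ref{lemma:flattening}, \ref{lemma:vec2} and~\ref{lemma:smallsolution}. For the soundness direction ($\Leftarrow$), I would trace back what a successful run of the procedure yields: a flat formula $\varphi'$ produced in step~1, a polynomial set $T$ of $1$-types guessed to be realised in step~2, and a solution over $\N$ to the polynomial-size system $\mathcal{E}_\varphi''$ found in step~4. Extending that partial solution with zeros on all $1$-type unknowns outside $T$ produces a solution of the full system $\mathcal{E}_{\varphi'}'$, and by Lemma~\ref{lemma:vec2} this tuple is the characteristic vector of some finite model of $\varphi'$. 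Lemma~\ref{lemma:flattening} then transfers finite satisfiability back to $\varphi$.

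For the completeness direction ($\Rightarrow$), I would start from a finite model $\str{M}$ of $\varphi$ and follow the accepting branch of the flattening procedure from Lemma~\ref{lemma:flattening}, obtaining a finitely satisfiable flat formula $\varphi'$ with some model $\str{M}'$. The characteristic vector of $\str{M}'$, together with the appropriate quotients $y_i$ for the eliminated congruences, is a solution of $\mathcal{E}_{\varphi'}'$. The pivotal step is to apply Lemma~\ref{lemma:smallsolution} to this system: the number of inequalities $I$ is linear in $|\varphi|$, and although each coefficient may be exponential in $|\varphi|$, its logarithm $\log C$ is polynomial, so the bound $2I \log(2C\sqrt{I})$ is itself polynomial in $|\varphi|$. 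Hence the promised sparse solution uses only polynomially many non-zero unknowns, and the corresponding realised $1$-types constitute a valid guess $T$ for step~2. The resulting system $\mathcal{E}_\varphi''$ is then satisfiable, so the algorithm accepts by \cite{BoroshandTreybig} in step~4.

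The hard part will be the accounting inside Lemma~\ref{lemma:smallsolution}: one has to argue carefully that the parameters $I$ and $C$ are bounded as claimed despite the exponentially many $1$-types and the exponentially large coefficients arising from binary encoding, so that the polynomial size of the support is preserved. A minor technical subtlety is checking that extending a solution of $\mathcal{E}_\varphi''$ by zeros really is a solution of $\mathcal{E}_{\varphi'}'$, but this is immediate from linearity, since variables set to zero contribute nothing to either side of any inequality.
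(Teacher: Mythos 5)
Your proposal is correct and follows essentially the same route as the paper: flattening via Lemma~\ref{lemma:flattening}, passing through the characteristic-vector correspondence (Lemmas~\ref{lemma:vec1} and~\ref{lemma:vec2}), invoking the sparse-solution bound of Lemma~\ref{lemma:smallsolution} to justify the polynomial guess of realised $1$-types, and finishing with the $\NP$ integer-programming result. You merely spell out details (the zero-extension of the partial solution, the check that $2I\log(2C\sqrt{I})$ stays polynomial despite exponential coefficients) that the paper's own proof leaves implicit in the surrounding discussion.
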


\begin{proof}
We first assume that an input formula~$\varphi$ has a finite model. Therefore, we can
obtain a flat finitely satisfiable formula~$\varphi'$ (by Lemma~\ref{lemma:flattening}) 
and describe its model in terms of linear inequalities and congruences 
$\mathcal{E}_{\varphi'}$ (by Lemma~\ref{lemma:vec1}). 
Clearly the system has a solution over~$\N$ (e.g., a characteristic 
vector~$\chi_{\varphi'}$), hence also suitable choices 
for~$\mathcal{E}_{\varphi'}'$ and~$\mathcal{E}_{\varphi'}''$ have solutions. 
Hence Procedure \ref{algo:sat_test_pfoone} returns \textit{True}.

Conversely, suppose that Procedure \ref{algo:sat_test_pfoone} 
returns \textit{True} for its input formula~$\varphi$.  
We construct a model for~$\varphi$. We do it simply by 
taking a proper number of realizations of each~$1$-type, exactly as described 
in the solution of the constructed system of linear inequalities~$\mathcal{E}_{\varphi'}''$.
\end{proof}

Using the above lemma, one can conclude the following theorem:

\begin{theorem}
The finite satisfiability problem for~$\PFOOne$ is~$\NP$-complete.   
\end{theorem}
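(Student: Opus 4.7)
The plan is to establish matching upper and lower bounds. The NP upper bound will follow from the correctness lemma just proved by checking that every step of Procedure~\ref{algo:sat_test_pfoone} can be carried out in nondeterministic polynomial time. The NP lower bound is inherited directly: $\PFOOne$ trivially extends $\FOOne$, whose finite satisfiability is already known to be NP-hard, so no new hardness argument is required.

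For the runtime analysis I would go through the algorithm line by line. Line~1 is in NP by Lemma~\ref{lemma:flattening} and produces a flat formula $\varphi'$ of polynomial size over the original signature. For line~2 I would apply Lemma~\ref{lemma:smallsolution} to $\mathcal{E}_{\varphi'}'$: as noted just after Lemma~\ref{lemma:vec2}, the number of inequalities $I$ is polynomial in $|\varphi|$ while the maximal absolute coefficient $C$ is at most exponential in $|\varphi|$. Hence the sparse-solution bound $2I \log(2C\sqrt{I})$ is polynomial in $|\varphi|$, and the guess of which $1$-types are realized can be recorded in polynomial space. Line~3 writes the restricted system $\mathcal{E}_{\varphi'}''$, whose bit-length is polynomial because it involves only polynomially many unknowns and polynomially many inequalities with coefficients of polynomial bit-length. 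Line~4 is an integer-programming feasibility test over $\N$, which lies in NP by~\cite{BoroshandTreybig}.

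The main obstacle, and the only place where real care is required, is the size estimate underlying line~2. Although the coefficients of $\mathcal{E}_{\varphi'}'$ may be exponentially large and the number of unknowns is itself exponential (one variable per $1$-type over $\tau_{\varphi}$), the logarithm appearing inside Lemma~\ref{lemma:smallsolution} tames the exponential blow-up, so the number of active unknowns remains polynomial. Once this is confirmed, combining the per-line bounds yields an overall NP procedure, and together with the hardness inherited from $\FOOne$ the theorem follows.
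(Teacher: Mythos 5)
Your proposal matches the paper's proof: the lower bound is inherited from propositional/one-variable satisfiability, and the upper bound comes from verifying that each line of Procedure~\ref{algo:sat_test_pfoone} runs in nondeterministic polynomial time, using Lemma~\ref{lemma:flattening}, Lemma~\ref{lemma:vec1}, the sparse-solution bound of Lemma~\ref{lemma:smallsolution} (with $I$ polynomial and $C$ exponential, so $2I\log(2C\sqrt{I})$ is polynomial), and the NP integer-programming result of~\cite{BoroshandTreybig}. This is essentially the same argument as in the paper, with the coefficient/size accounting spelled out slightly more explicitly.
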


\begin{proof}
The lower bound comes trivially from Boolean satisfiability problem or 
from the earlier works on~$\COne$ \cite{PrattHartmann08}. For the upper bound 
it is enough to note that Procedure~\ref{algo:sat_test_pfoone} works in~$\NP$.
It follows from (i) the fact that flattening can be done in~$\NP$ 
(Lemma~\ref{lemma:flattening}), (ii) correspondence between systems of
inequalities and characteristic vectors of~$\PFOOne$~models (Lemma~\ref{lemma:vec1}), 
(iii) existence of sparse solutions of systems of inequalities 
(Lemma \ref{lemma:smallsolution}), and~(iv) an~$\NP$ algorithm for solving 
systems of inequalities with polynomially many unknowns~\cite{BoroshandTreybig}.
\end{proof}


\section{Conclusions and future work}

\subsection{Conclusions}

In this article we proposed a new logic called~$\PFOOne$ which significantly 
increase the expressive power of the one-variable fragment of first-order logic.
The obtained logic generalizes previously known concepts of counting, i.e., 
threshold counting, modulo counting and cardinality comparison. By using a 
generic method of transforming a formula into a system of inequalities, we 
prove that every satisfiable~$\PFOOne$ formula can be represented as a 
system of inequalities of polynomial size. By using a well-known theorem that 
integer linear programming is in~$\NP$ we obtained a tight~$\NP$ upper bound 
for finite satisfiability for the logic~$\PFOOne$. This proves that the 
complexity of~$\PFOOne$ with expressive numerical constraints does not 
differ from the classical one-variable fragment of~$\FO$, or even from
Boolean satisfiability, which is rather surprising.

\subsection{Future work}

For future work we would like to investigate other classical decidable 
fragments of first-order logic and see how their complexity and decidability 
status behaves after adding some form of Presburger constraints. 

One candidate could be the two-variable fragment of first-order logic~$\FOt$. 
However in the presence of cardinality comparison the logic becomes undecidable
\cite{GradelOR99}. 

Another prominent logic is a two-variable fragment of 
the guarded fragment of first-order logic~$\GFt$, which is known to be decidable 
even in the presence of counting quantifiers~\cite{Pratt-Hartmann07}. However, 
even adding modulo constraints to the logic is a challenging task and currently 
we do not even have a decidability proof. 
On the other hand, some decidable fragments of~$\GFt$ extended with
Presburger constraints are known. We already know that the complexity of 
the modal logic~$\mathcal{K}$ or the description logic~$\mathcal{ALC}$ 
do not differ from their Presburger versions, 
see~\cite{Baader17,DemriL10,KupkePS15}.
We believe that to obtain tight complexity bounds for Presburger~$\GFt$ one
should start with a more modest goal, i.e., to establish the exact complexity
of Presburger~$\mathcal{ALCI}$, namely an extension of~$\mathcal{ALC}$ with
inverse relations.


\section*{Acknowledgments}

This work is supported by the Polish Ministry of Science and 
Higher Education program "Diamentowy Grant" no. DI2017 006447. 
The author would also like to thank the two anonymous reviewers 
as well as Witold Charatonik, Emanuel Kiero\'nski and Antti Kuusisto
for their careful proofreading and for pointing out numerous 
grammatical mistakes. 


\bibliographystyle{plain}
\bibliography{bibliography}

\begin{thebibliography}{10}

\bibitem{AlievLEOW18}
Iskander Aliev, Jes{\'{u}}s A.~De Loera, Friedrich Eisenbrand, Timm Oertel, and
  Robert Weismantel.
\newblock {The} {Support} of {Integer} {Optimal} {Solutions}.
\newblock {\em {SIAM} Journal on Optimization}, 28(3):2152--2157, 2018.

\bibitem{Baader17}
Franz Baader.
\newblock {A} {N}ew {D}escription {L}ogic with {S}et {C}onstraints and
  {C}ardinality {C}onstraints on {R}ole {S}uccessors.
\newblock In Clare Dixon and Marcelo Finger, editors, {\em Frontiers of
  Combining Systems - 11th International Symposium, FroCoS 2017,
  Bras{\'{\i}}lia, Brazil, September 27-29, 2017, Proceedings}, volume 10483 of
  {\em Lecture Notes in Computer Science}, pages 43--59. Springer, 2017.

\bibitem{Bednarczyk-ESSLLI}
Bartosz Bednarczyk.
\newblock {O}n {O}ne {V}ariable {F}ragment of {F}irst {O}rder {L}ogic with
  {M}odulo {C}ounting {Q}uantifier.
\newblock In Karoliina Lohiniva and Johannes Wahle, editors, {\em ESSLLI 2017
  Student Session, 29th European Summer School in Logic, Language {\&}
  Information, July 17-28, 2017, Toulouse, France}, pages 7--13, 2017.

\bibitem{BoroshandTreybig}
I.~Borosh, M.~Flahive, and B.~Treybig.
\newblock Small solutions of linear {D}iophantine equations.
\newblock {\em Discrete Mathematics}, 58(3):215--220, 1986.

\bibitem{DemriL10}
St{\'{e}}phane Demri and Denis Lugiez.
\newblock {Complexity} of modal logics with {Presburger} constraints.
\newblock {\em J. Applied Logic}, 8(3):233--252, 2010.

\bibitem{FingerB17}
Marcelo Finger and Glauber~De Bona.
\newblock {Algorithms} for {Deciding} {Counting} {Quantifiers} over {Unary}
  {Predicates}.
\newblock In Satinder~P. Singh and Shaul Markovitch, editors, {\em Proceedings
  of the Thirty-First {AAAI} Conference on Artificial Intelligence, February
  4-9, 2017, San Francisco, California, {USA.}}, pages 3878--3884. {AAAI}
  Press, 2017.

\bibitem{GradelOR97}
Erich Gr{\"{a}}del, Martin Otto, and Eric Rosen.
\newblock {T}wo-{V}ariable {L}ogic with {C}ounting is {D}ecidable.
\newblock In {\em Proceedings, 12th Annual {IEEE} Symposium on Logic in
  Computer Science, Warsaw, Poland, June 29 - July 2, 1997}, pages 306--317.
  {IEEE} Computer Society, 1997.

\bibitem{GradelOR99}
Erich Gr{\"{a}}del, Martin Otto, and Eric Rosen.
\newblock {Undecidability} results on two-variable logics.
\newblock {\em Arch. Math. Log.}, 38(4-5):313--354, 1999.

\bibitem{KazakovP09}
Yevgeny Kazakov and Ian Pratt{-}Hartmann.
\newblock {A} {N}ote on the {C}omplexity of the {S}atisfiability {P}roblem for
  {G}raded {M}odal {L}ogics.
\newblock In {\em Proceedings of the 24th Annual {IEEE} Symposium on Logic in
  Computer Science, {LICS} 2009, 11-14 August 2009, Los Angeles, CA, {USA}},
  pages 407--416. {IEEE} Computer Society, 2009.

\bibitem{KuncakR07}
Viktor Kuncak and Martin~C. Rinard.
\newblock {T}owards {Efficient} {Satisfiability} {Checking} for {Boolean}
  {Algebra} with {Presburger} {Arithmetic}.
\newblock In Frank Pfenning, editor, {\em Automated Deduction - CADE-21, 21st
  International Conference on Automated Deduction, Bremen, Germany, July 17-20,
  2007, Proceedings}, volume 4603 of {\em Lecture Notes in Computer Science},
  pages 215--230. Springer, 2007.

\bibitem{KupkePS15}
Clemens Kupke, Dirk Pattinson, and Lutz Schr{\"{o}}der.
\newblock {R}easoning with {G}lobal {A}ssumptions in {A}rithmetic {M}odal
  {L}ogics.
\newblock In Adrian Kosowski and Igor Walukiewicz, editors, {\em Fundamentals
  of Computation Theory - 20th International Symposium, {FCT} 2015,
  Gda{\'{n}}sk, Poland, August 17-19, 2015, Proceedings}, volume 9210 of {\em
  Lecture Notes in Computer Science}, pages 367--380. Springer, 2015.

\bibitem{PacholskiST00}
Leszek Pacholski, Wieslaw Szwast, and Lidia Tendera.
\newblock {Complexity} {Results} for {First}-{Order} {Two}-{Variable} {Logic}
  with {Counting}.
\newblock {\em {SIAM} J. Comput.}, 29(4):1083--1117, 2000.

\bibitem{Pratt-Hartmann07}
Ian Pratt{-}Hartmann.
\newblock {Complexity} of the {G}uarded {T}wo-variable {F}ragment with
  {C}ounting {Q}uantifiers.
\newblock {\em J. Log. Comput.}, 17(1):133--155, 2007.

\bibitem{PrattHartmann08}
Ian Pratt{-}Hartmann.
\newblock On the {Computational} {Complexity} of the {N}umerically {D}efinite
  {S}yllogistic and related logics.
\newblock {\em Bulletin of Symbolic Logic}, 14(1):1--28, 2008.

\end{thebibliography}
\end{document}